\newtheorem{theorem}{Theorem}
\newtheorem{remark}{Remark}
\definecolor{rred}{rgb}{0.55, 0.0, 0.0}
\begin{document}

\title{Forecast error growth: A dynamic--stochastic
model}
\author{Eviatar Bach}
\email{eviatarbach@protonmail.com}
\affiliation{Department of Environmental Science and Engineering and Department of Computing and Mathematical Sciences, California Institute of Technology, Pasadena, California, 91125, USA}
\affiliation{Department of Meteorology and Department of Mathematics and Statistics, University of Reading, Reading, RG6 6ET, UK}
\affiliation{National Centre for Earth Observation, Reading, RG6 6ET, UK}

\author{Dan Crisan}
\email{d.crisan@imperial.ac.uk}
\affiliation{Department of Mathematics, Imperial College London, London, SW7 2AZ, UK}

\author{Michael Ghil}
\email{ghil@lmd.ipsl.fr}
\affiliation{Geosciences Department and Laboratoire de M\'et\'eorologie Dynamique (CNRS and IPSL), École Normale Sup\'erieure and PSL University, Paris, 75005, France}
\affiliation{Department of Atmospheric and Oceanic Sciences, University of
	California, Los Angeles, California, 90095, USA}
\affiliation{Department of Mathematics, Imperial College London, London, SW7 2AZ, UK}

\date{\today }

\begin{abstract}   
There is a history of simple forecast error growth models designed to capture the key properties of error growth in operational numerical weather prediction (NWP) models. We propose here such a scalar model that relies on the previous ones and incorporates multiplicative noise in a nonlinear stochastic differential equation (SDE). We analyze the properties of this SDE, including the shape of the error growth curve for small times and its stationary distribution, and prove well-posedness and positivity of solutions. Next, we fit this model to operational NWP error growth curves, and show good agreement with both the mean and probabilistic features of the error growth. These results suggest that the dynamic--stochastic error growth model proposed herein and similar ones could play a role in many other areas of the sciences that involve prediction.


\bigskip
\begin{center}
       {\it This paper is dedicated to the memory of Eugenia Kalnay and to her fundamental contributions to research in predictability and weather forecasting.}
\end{center}

\end{abstract}

\pacs{}
\maketitle
\preprint{AIP/123-QED}


\begin{quotation}
	Quantifying and understanding forecast error growth plays a key role in prediction in many fields of the natural and socio-economic sciences. Besides the extensive literature on the study of error growth in numerical weather prediction (NWP), this topic also plays an important role in understanding the limitations of prediction in other parts of the Earth system\cite{wirth_error_2000}, as well as in biology\cite{baird_increasing_2010,pei_counteracting_2017} and the social sciences\cite{farmer_how_2016}. Here we propose a simple stochastic model to describe forecast error growth. We analyze its mathematical properties and show that it exhibits a good fit to real NWP error growth curves.
\end{quotation}

\section{Introduction}\label{sec:intro}

\begin{figure}
    \centering
    \includegraphics[width=0.7\linewidth]{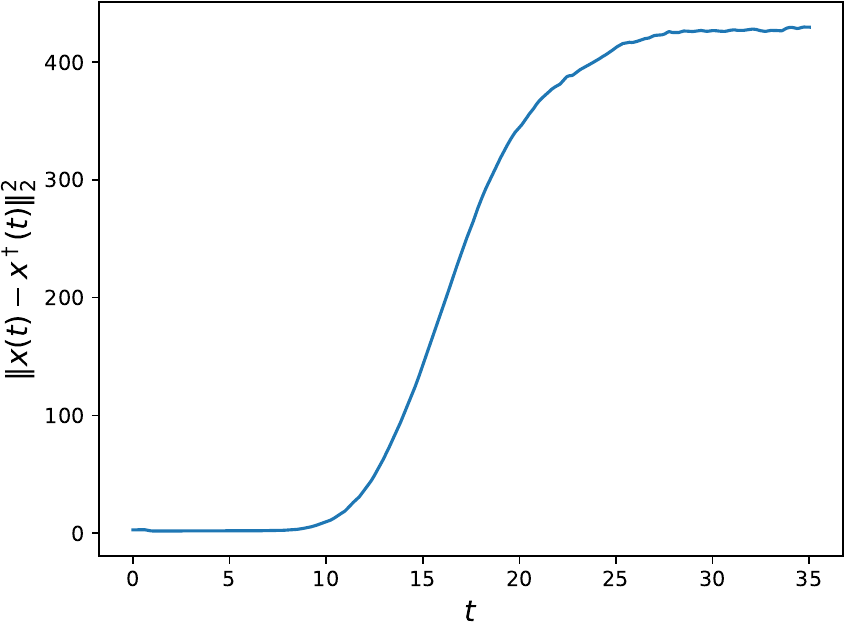}
    \caption{A typical pattern of error growth for a chaotic dynamical system. The system used here is the chaotic convection model of Lorenz (1963)\cite{lorenz_deterministic_1963}. The error growth is averaged over many initial conditions and a moving average filter is applied.}
    \label{fig:lorenz63_error}
\end{figure}
Consider a dynamical system
\begin{equation}\label{eq:ds}
    \mathrm{d}x = f(x)\,\mathrm{d}t + \Sigma^{1/2}(x)\,\mathrm{d}B,
\end{equation}
where $\Sigma$ is a symmetric positive definite covariance matrix and $B$ is a Wiener process. Given a true trajectory $x^\dagger(t)$ and a forecast trajectory $x(t)$, both generated by an equation like Eq.~\eqref{eq:ds}, one can measure the forecast error by
\begin{equation}
    e^2(t) = \|x(t) - x^\dagger(t)\|^2_2,
\end{equation}
where $\|\cdot\|_2$ is the Euclidean norm; other norms can also be considered. If $x^\dagger(0)\neq x(0)$ and the system is chaotic in the absence of noise, $e(t)$ will generally grow exponentially in time until reaching a saturation level; see Fig.~\ref{fig:lorenz63_error} for a typical example. If the noise realization differs between $x^\dagger(t)$ and $x(t)$, the noise will also contribute to the error growth. Moreover, a model typically differs from the system being modeled---i.e., $x^\dagger(t)$ and $x(t)$ come from similar but not identical systems---and this systematic model error contributes to the error growth as well.

\subsection{Stochastic treatment of error growth}\label{ssec:stochastic}

Suppose we have some uncertainty in the knowledge of the initial conditions $x^\dagger(0)$, represented by a probability distribution $\rho_0(x)$, such that $x^\dagger(0) \sim \rho_0(x)$. This probability distribution $\rho(x, t)$ will evolve under the dynamics according to the Fokker--Planck equation:
\begin{equation}
    \frac{\partial\rho}{\partial t} = -\nabla\cdot(\rho f) + \frac{1}{2} \nabla\cdot\bigl(\nabla\cdot(\rho \Sigma)\bigr),
\end{equation}
with $\rho(x, 0) = \rho_0(x)$. In the above equation, the divergence of a matrix $S$, such as $\rho\Sigma$, is defined, for any vector $a$, by the identity $(\nabla\cdot S)\cdot a \equiv \nabla\cdot(S^T a)$, and $\nabla\cdot b$ is the divergence of a vector $b$.

If we further assume that a forecast was initialized from $x(0)\sim \rho_0$, then $x(t)\sim\rho(x, t)$, and $e(t)$ will be a random variable whose properties can be derived from those of $\rho(t)$. For example, the expected squared error satisfies
\begin{subequations}
\begin{align}
    \mathbb{E}[e(t)^2] &= \mathbb{E}[\|x(t) - x^\dagger(t)\|_2^2]\\
    &= \mathbb{E}[\|x(t) - \mathbb{E}[x(t)]\|_2^2 + \|x^\dagger(t) - \mathbb{E}[x(t)]\|_2^2],\nonumber\\
    &= 2\text{tr}(\text{Cov}[x(t)]),
\end{align}
\end{subequations}
and the cross-covariance terms vanish because $x$ and $x^\dagger$ are independent and identically distributed.

If $\rho(t)$ converges to a unique invariant measure, that is, $\rho(t)\to\rho_\infty$ as $t\to\infty$, then $\text{Cov}[x(t)]\to C_\infty$, known in meteorological contexts as the climatological variance. When the forecast error reaches this climatological variance, the forecast has lost skill completely, and the forecast with the lowest error is the mean of $\rho_\infty$, with an expected squared error of $\text{tr}(C_\infty)$; see Leith (1974)\cite{leith_theoretical_1974}. Additional statistical properties of $e(t)$ can also be studied.

In ensemble prediction, such as that used nowadays in NWP\cite{kalnay_historical_2019}, a set of $M$ initial conditions drawn from $\rho_0$, $\{x^{(i)}\}_{i=1}^M$, are evolved in time under the dynamics Eq.~\ref{eq:ds}. For a perfect ensemble, each ensemble member will be statistically indistinguishable from the true trajectory \cite{palmer_ensemble_2006}. Thus, the error of each member will be a sample of the random variable $e(t)$ discussed above.

The preceding discussion shows that uncertainty in the initial conditions leads naturally to a probabilistic treatment of error growth, even in the case of a deterministic dynamical system where $\Sigma = 0$. Moreover, sources of error growth in deterministic systems that may nonetheless be modeled as stochastic include fluctuations in the finite-time Lyapunov exponents\cite{nicolis_probabilistic_1992}, the impact of small and fast scales on large and slow ones, and model error \cite{ehrendorfer_liouville_1994}. Other works that took a probabilistic view on forecast error growth include Balgovind et al. (1983)\cite{balgovind_stochastic-dynamic_1983}, Benzi and Carnevale (1989)\cite{benzi_possible_1989}, Ehrendorfer (1994)\cite{ehrendorfer_liouville_1994}, Ivanov et al. (1994)\cite{ivanov_prediction_1994}, and Chu and Ivanov (2002)\cite{chu_linear_2002}. Here we do not deal with these sources of error explicitly, but do consider their cumulative effect.

\subsection{Previous error growth models in NWP}
\label{ssec:class}

Considerable attention has been given to forecast error growth in the forecast--assimilation (FA) cycle of operational NWP models, and a number of scalar error growth models have been proposed so far. For illustration purposes, we start by considering here four different models for error growth in NWP, namely those of C. E. Leith \cite{leith_objective_1978}, E. N. Lorenz \cite%
{lorenz_atmospheric_1982}, A. Dalcher and E. Kalnay \cite%
{dalcher_error_1987}, and C. Nicolis\cite{nicolis_probabilistic_1992}; see also Appendix A of Crisan and Ghil (2023)\cite{crisan_asymptotic_2023} and Krishnamurthy (2019)\cite{krishnamurthy_predictability_2019} for reviews of the literature.

Letting $v(t) = e(t)^2$, the forecast error model of Leith (1978)\cite{leith_objective_1978} is given by the
scalar, linear inhomogeneous ordinary differential equation (ODE) 
\begin{equation}  \label{eq:Leith}
\frac{\text{d}v}{\text{d}t} = \alpha v + s,
\end{equation}
where $\alpha>0$ measures the rate of growth of small errors, $s$ is the systematic model error, and $t$ is the lead time. The solution to this ODE is given explicitly by 
\begin{equation}
v(t) = \left(v_0 + \frac{s}{\alpha}\right) e^{\alpha t} - \frac{s}{\alpha},
\end{equation}
where $v_0 = v(0)$ is the initial error. Note that short-time forecast errors grow exponentially and that the systematic model error acts to increase the coefficient of this growth. Leith's forecast error model can only apply for short-time error growth, since it does not saturate.

Lorenz (1982)\cite{lorenz_atmospheric_1982} proposed another model of forecast error growth that is governed by the following scalar ODE with quadratic right-hand side,
\begin{equation}  \label{eq:Lorenz}
\frac{\text{d}e}{\text{d}t} = ae(e_\infty - e),
\end{equation}
where $e_\infty$ is a saturation value. While Lorenz's model does include a nonlinear saturation term, it does not
incorporate systematic model error.

The error model of Dalcher and Kalnay (1987)\cite{dalcher_error_1987}, henceforth DK, combines the key features of the Leith and Lorenz models, 
\begin{equation}  \label{eq:DK}
\frac{\text{d}v}{\text{d}t} = (\alpha v + s)(1 - v/v_\infty);
\end{equation}
it thus includes both a saturation level $v_\infty$ and systematic model error $s$. For short-time error growth, we can take $v_\infty\to \infty$, recovering Leith's model. For $s = 0$, the model is similar to that of Lorenz. DK fit this error growth model to operational NWP forecasts from the European Centre for Medium-Range Weather Forecasting (ECMWF), finding close agreement out to 10 days. Note that $s$ has previously been taken to represent sources of error other than systematic model error as well \cite{zhang_what_2019}.

Nicolis (1992)\cite{nicolis_probabilistic_1992} proposed a stochastic error model of the form
\begin{equation}
	\text{d}\xi = (\sigma - g\xi)\xi\,\text{d}t + q\xi\,\text{d}W,\label{eq:nicolis}
\end{equation}
where $\xi$ is a measure of error along the unstable direction, $W$ is a Wiener process, and $g\geq 0$ and $q$ are constants. To the best of our knowledge, this work is the only stochastic model of error growth previously proposed. The formulation \eqref{eq:nicolis} is justified in the paper by considering a toy model of the atmosphere and introducing stochasticity coming from fluctuations in the finite-time Lyapunov exponents. This model was also analyzed by Chu et al. (2002)\cite{chu_probabilistic_2002}. If we remove the stochasticity by setting $q=0$, \eqref{eq:nicolis} has the same form as the Lorenz model. This model does not appear to have been calibrated to operational NWP error growth.

Other error growth models include those of Schubert and Suarez (1989)\cite{schubert_dynamical_1989}, of Stroe and Royer (1993)\cite{stroe_comparison_1993}, and of Reynolds et al. (1994)\cite{reynolds_random_1994}.

In the present paper, we build on the previous error growth models above by adding multiplicative noise, as in the Nicolis model, to the DK model.

The paper is structured as follows. In the next section, we introduce the proposed stochastic model with multiplicative noise and study its properties, including its stationary distribution and first passage times. We then fit the model to operational NWP error curves in Sec.~\ref{sec:NWP}. In Sec.~\ref{sec:conclude}, we discuss the results thus obtained. We also comment on the likelihood of these results applying to other areas of the sciences where prediction is important. In Appendix~\ref{sec:proof}, we prove well-posedness and positivity of our nonlinear stochastic model, with other model properties discussed in Appendices~\ref{sec:stationary} and \ref{sec:curvature}.

\section{A stochastic error growth model}\label{sec:growth}

\subsection{Stochastic model formulation}\label{ssec:SDE}

\begin{figure}[tbp]
	\includegraphics[scale=0.35]{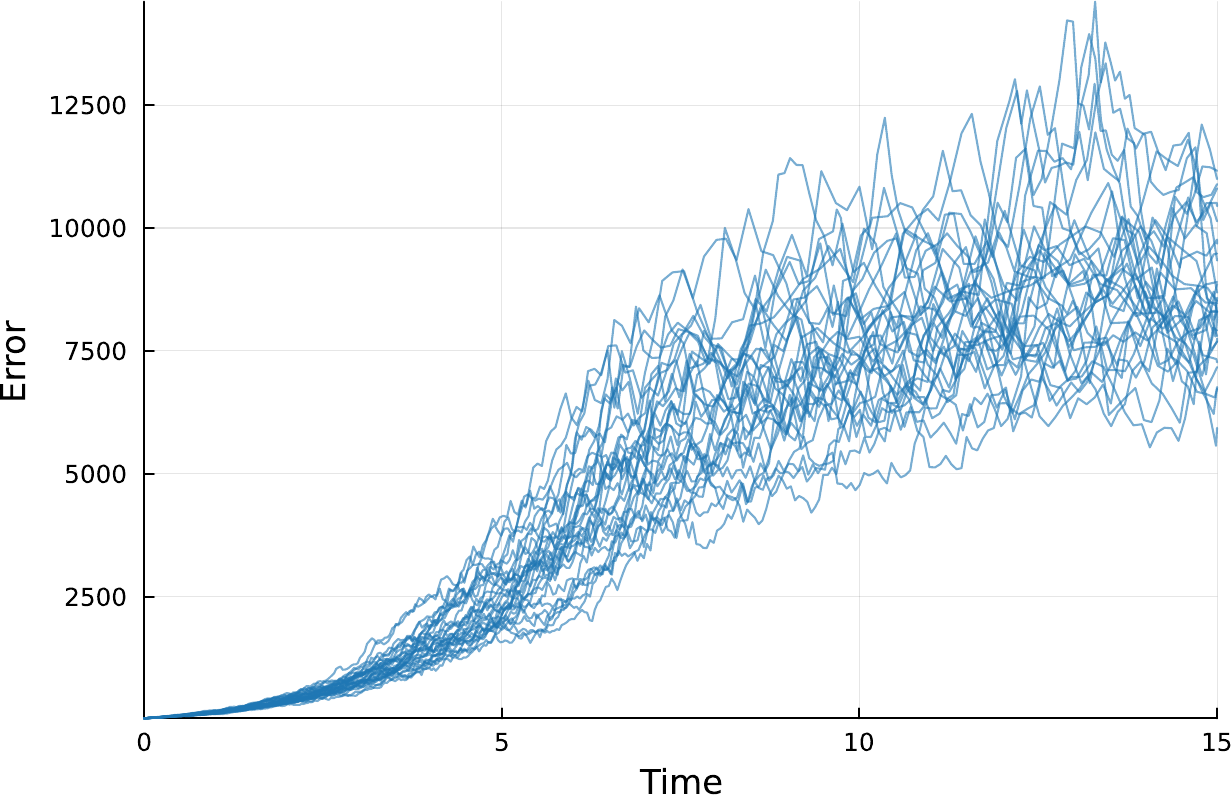}
	\caption{Sample realizations of the SDE~\eqref{eq:SDE_add}. Here $v(0) = 30$ and the parameters are picked to be $v_\infty = 9000$, $\alpha = 0.6$, $s = 80$, and $\sigma = 0.2$.
	}
    \label{fig:sample_paths}
\end{figure}
We introduce multiplicative noise into the DK \cite{dalcher_error_1987} model by writing the It\^o stochastic differential equation (SDE)
\begin{equation}
\mathrm{d}v=(\alpha v+s)(1-v/v_{\infty})\,\mathrm{d}t+\sigma v \,\mathrm{d}W,
\label{eq:SDE_add}
\end{equation}
with $W$ a standard Wiener process and $\sigma$ a constant. Figure~\ref{fig:sample_paths} shows some sample realizations from this SDE for a particular choice of parameter values.

This error model differs from the Nicolis model in Eq.~\eqref{eq:nicolis} as the latter has no equivalent to the term involving $s$. Note that $v_\infty$ no longer corresponds exactly to the saturation value, as it did in the deterministic case. In Sec.~\ref{ssec:fpe}, we consider two different definitions of a saturation value in the stochastic context: one is the maximizer of the stationary distribution, and the other is its expectation.

We chose multiplicative noise rather than additive noise since the former provided a better fit to operational NWP error curves; see Sec.~\ref{sec:NWP} here and the heuristic justifications for multiplicative noise in Nicolis (1992)\cite{nicolis_probabilistic_1992}. We will see in Appendix~\ref{sec:proof} that multiplicative noise leads, almost surely, to positive solutions given positive initial conditions. This property matters because the forecast error cannot be negative. A noise term that is a nonlinear function of $v$ could also be tried.

For $s = 0$ and as $v/v_\infty\to 0$, the drift term becomes linear in $v$ and the process becomes a geometric Brownian motion with solution
\begin{equation}
	v(t) = v(0)\exp\left( \left(\alpha - \frac{\sigma^2}{2} \right)t + \sigma W\right),
\end{equation}
with mean $\mathbb{E}[v(t)] = v(0)\exp(\alpha t)$ and variance $\operatorname{Var}[v(t)]= v(0)^2\exp(2\alpha t) \left( \exp(\sigma^2 t)-1\right)$. This formula approximately describes the small-time behavior of the process when $v(0)\ll v_\infty$, assuming that $s$ is small.

Our model \eqref{eq:SDE_add} also resembles that of stochastic logistic growth; see Liu and Wang (2013)\cite{liu_note_2013}.

\subsection{Nondimensionalization}

Setting 
\begin{subequations} \label{eq:ratios}
	\begin{align}
& T=\alpha ^{-1}, \quad V=s\alpha ^{-1}, \quad t^{\ast}=t/T, \label{nondim1} \\
& v^{\ast }=v/V, \quad v_{\infty }^{\ast } = v_{\infty }/V, \quad \beta = \sigma \alpha ^{-1/2}, \label{nondim2}
     \end{align}
\end{subequations}
we obtain the nondimensional equation: 
\begin{equation}
	\text{d}v^{\ast }=(v^{\ast }+1)\left( 1-\frac{v^{\ast }}{v_{\infty }^{\ast }}%
	\right) \text{d}t^{\ast }+\beta v^* \text{d}W^{\ast }  
	\label{eq:nondim}
\end{equation}
Thus, we have two nondimensional parameters, $\beta$ and $v_{\infty }^{\ast}$. The $\beta$ parameter here is directly related to the $\lambda$ parameter used by Chu et al. (2002)\cite{chu_probabilistic_2002} for the Nicolis model. Note that this nondimensionalization can only be applied when $s \neq 0$. When $s = 0$, we will work with the dimensional equation~\eqref{eq:SDE_add}.

In Appendix~\ref{sec:proof}, we show that if $v^*(0)>0$, Eq.~\eqref{eq:nondim} has, almost surely, a unique and positive global solution.

\subsection{Fokker--Planck equation and stationary distribution}\label{ssec:fpe}

The Fokker--Planck equation that corresponds to Eq.~\eqref{eq:nondim} is 
\begin{equation}
	\frac{\partial\rho}{\partial t^*} = -\frac{\partial}{\partial v^*}\left[(v^{\ast }+1)\left( 1-\frac{v^{\ast }}{v_{\infty }^{\ast }}%
	\right)\rho\right] + \frac{\partial^2}{\partial {v^*}^2}\left[\frac{\beta^2 {v^*}^2}{2}\rho\right].
\end{equation}
We would like to find a stationary distribution $\rho_\infty$ for which $\partial \rho_\infty/\partial t^* = 0$.

The detailed computations are provided in Appendix~\ref{sec:stationary}. The resulting stationary distribution is
\begin{equation}\label{eq:stationary_dist}
	\rho_\infty(v^*) = D\,\exp\!\left(-\frac{2 \left(\left(\beta ^2 v_{\infty}^*-v_{\infty}^*+1\right) \log (v^*)+\frac{\displaystyle v_{\infty}^*}{\displaystyle v^*}+v^*\right)}{\beta ^2 v_{\infty}^*}\right);
\end{equation}
here
\begin{equation}
	D =  \frac{{v_{\infty}^*}^{\displaystyle \frac{{v_{\infty}^*}^{-1}-1}{\beta ^2}+\frac{1}{2}}}{2 K_{m}\left( \displaystyle \frac{4}{{v_{\infty}^*}^{1/2} \beta ^2}\right)},\quad m = \frac{2{v_{\infty}^*}^{-1}-2}{\beta ^2}+1,
\end{equation}
and $K_n(z)$ is the modified Bessel function of the second kind, which satisfies $-y \left(n^2+z^2\right)+z^2 y''+z y'=0$.

\begin{remark}
Note that if $s = 0$, Eq.~\eqref{eq:stationary_dist} reduces to Eq.~(23) in Nicolis (1992)\cite{nicolis_probabilistic_1992}; i.e., in dimensional variables:
\begin{equation}
	\rho_\infty(v) = \frac{2^{\displaystyle\frac{2 \alpha }{\sigma ^2}-1} v^{\displaystyle\frac{2 \alpha }{\sigma ^2}-2} \left(\displaystyle\frac{\alpha }{\sigma ^2 v_\infty}\right)^{\displaystyle\frac{2 \alpha }{\sigma ^2}-1} e^{\displaystyle-\frac{2 \alpha  v}{\sigma ^2 v_\infty}}}{\Gamma \left(\frac{\displaystyle2 \alpha }{\displaystyle\sigma ^2}-1\right)},
\end{equation}
with $\alpha>\sigma^2/2$.
\end{remark}

We find, furthermore, that $\rho_\infty$ is unimodal, and maximized at
\begin{equation}
	v^*=\frac{1}{2} \left(-\beta ^2 v_{\infty}^*+[\left(\beta ^2 v_{\infty}^*-v_{\infty}^*+1\right)^2+4 v_{\infty}^*]^{1/2}+v_{\infty}^*-1\right).
\end{equation}
As $t\to\infty$, $\mathbb{E}[v^*(t)]$ saturates. When related to the dynamical error growth, this saturation value should correspond to twice the climatological variance; see section~\ref{ssec:stochastic}. 

The limit of the expectation is
\begin{equation}
    \lim_{t\to\infty}\mathbb{E}[v^*(t)] = (v_{\infty}^*)^{1/2}\frac{K_{m-1}\left(\frac{\displaystyle4}{\displaystyle{v_{\infty}^*}^{1/2} \beta ^2}\right)}{K_{m}\left(\frac{\displaystyle 4}{\displaystyle{v_{\infty}^*}^{1/2} \beta ^2}\right)},
\end{equation}
and the limit of the variance is
\begin{align}
    \lim_{t\to\infty}\text{Var}[v^*(t)] =& \frac{v_{\infty}^* K_{-m+2}\left(\frac{\displaystyle 4}{\displaystyle{v_{\infty}^*}^{1/2} \beta ^2}\right)}{K_{m}\left(\frac{\displaystyle 4}{\displaystyle{v_{\infty}^*}^{1/2} \beta ^2}\right)}\\
    &\qquad- \frac{v_{\infty}^* K_{m-1}\left(\frac{\displaystyle4}{\displaystyle{v_{\infty}^*}^{1/2} \beta ^2}\right){}^2}{K_{m}\left(\frac{\displaystyle 4}{\displaystyle {v_{\infty}^*}^{1/2} \beta ^2}\right)^2}\nonumber.
\end{align}

\subsection{First passage times and forecast skill horizons}\label{sec:first_passage}
\begin{figure}[tbp]
	\includegraphics[scale=0.4]{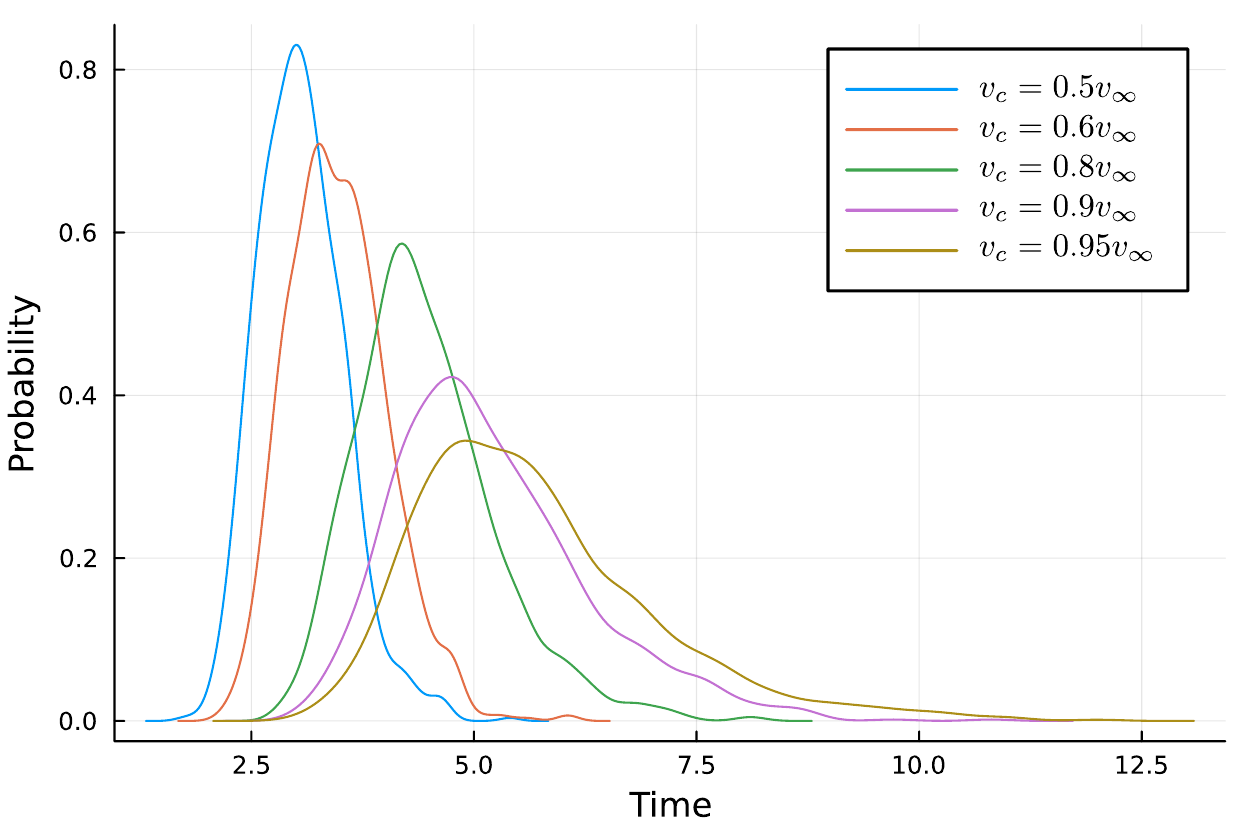}
	\caption{Empirical PDFs of first passage times for solutions of the SDE, for increasing threshold values, from $v_c = 0.5 v_{\infty}$ to $v_c = 0.95 v_{\infty}$. Time in nondimensional units. The PDFs are estimated from data using kernel density estimation \cite{hastie_elements_2009}; this is a likely reason for the small apparent deviations from unimodality.
	}
	\label{fig:exceedance_pdfs}
\end{figure}

A forecast is typically said to have lost skill when its error exceeds a certain threshold, e.g., some fraction of the climatological variance; see the discussion in section \ref{ssec:stochastic}. We can estimate the probability density function (PDF) of the times at which forecasts exceed a certain threshold $v_c$. These times have been previously referred to as first passage times\cite{ivanov_prediction_1994,chu_probabilistic_2002}. When the threshold is used to measure the loss of prediction skill, then these first passage times have also been referred to as forecast skill horizons\cite{buizza_forecast_2015}.

Figure~\ref%
{fig:exceedance_pdfs} shows empirical PDFs of first passage times for solutions of Eq.~\eqref{eq:SDE_add} with a given choice of parameters. We can see that these distributions tend to be positively skewed, i.e., the tail is towards larger first passage times. Moreover, the PDF flattens for higher values of the threshold.

Chu et al. (2002)\cite{chu_linear_2002}  examined in detail the distribution of first passage times for the Nicolis model of Eq.~\eqref{eq:nicolis}, including expressions for the moments of this distribution under various asymptotic regimes.

To summarize the skill horizon in a single number in this probabilistic perspective, we can consider:
\begin{enumerate}[(i), nosep]
\item The skill horizon of an average forecast, $\min \{t: \mathbb{E}[v(t)] > v_\text{c}\}$.
\item The average skill horizon of a forecast, $\mathbb{E}[\min \{t: v(t) > v_\text{c}\}]$.
\end{enumerate}
It may also be of interest to examine the tails of the first passage time PDFs, i.e., the probability of forecasts remaining skillful for an unusually long or short lead time.
  
\subsection{Initial curvature}

In Appendix~\ref{sec:curvature}, we examine the curvature of the expectation of the stochastic model at time 0; compare Fig.~\ref{fig:sample_paths} to Fig.~\ref{fig:sample_paths_concave} for an example of initially convex and initially concave expectation, respectively. We find that the curvature of the expectation, for a given deterministic initial condition $v_0$, is given by
\begin{align}
    \lim_{t\rightarrow 0}\frac{\mathrm{d}^{2}\mathbb{E}\left[ v\right] }{\mathrm{
d}t^{2}} &=\left( \alpha -\frac{s}{v_{\infty }}-2\frac{\alpha }{v_{\infty }}
v_{0}\right) (\alpha v_{0}+s)(1-v_{0}/v_{\infty })\label{eq:curvature}\\
&\qquad\qquad\qquad\qquad- \frac{\alpha}{v_\infty}\sigma^2 v_0^2.\nonumber
\end{align}
Thus, the curvature at time 0 may be either concave or convex, depending on the parameter values and the initial condition. When the curve is initially convex, it will still be asymptotically concave, implying that there is an inflection point.

Equation~\eqref{eq:curvature} also shows that noise acts to decrease the initial curvature. Furthermore, a large enough $\sigma$ guarantees that the error growth curve is concave throughout.

We note, however, that for the parameter values obtained from fits to NWP error growth in Sec.~\ref{sec:NWP}, the curvature is always positive and far from becoming negative even with nonzero $\sigma$.

\section{Error growth in the SDE model and in NWP models}
\label{sec:NWP}
To determine the usefulness of the stochastic model, we fit it to error growth curves from operational NWP.

\subsection{Forecasts and reanalysis data}

\begin{figure}
    \centering
    \includegraphics[scale=0.3]{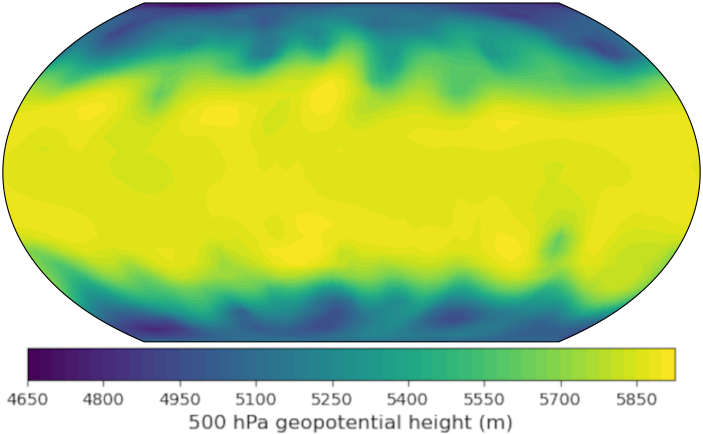}
    \caption{Example of a geopotential height forecast ensemble member.}
    \label{fig:gh}
\end{figure}

We used forecasts of 500~hPa geopotential heights from the TIGGE dataset of ensemble forecasts\cite{bougeault_thorpex_2010}. This is a common variable used in predictability studies\cite{benzi_possible_1989}; it corresponds roughly to the mid-level of the atmosphere, since the surface is, on average, at 1~000 hPa. Within this dataset, we use the global 50-member ensemble forecasts from the European Centre for Medium-Range Weather Forecasting (ECMWF), initialized daily at 00 UTC on the 1st through the 31st of January 2007. In these forecasts, the physics (parameterization) tendencies are stochastically perturbed to capture the uncertainty due to model error\cite{buizza_stochastic_1999}. See Fig.~\ref{fig:gh} for an example of the geopotential height field from the ensemble.

For each initialization time, we have ${50\choose 2} = 1225$ pairs of forecasts. For the identical twin experiments below, we compute the mean-squared difference between each pair as a function of lead time. Since the data is provided on a latitude--longitude grid, the mean is weighted by the area of each grid cell to account for variations in the distance corresponding to a degree of longitude. We aggregate over all the initialization dates, resulting in a total of 37~975 error curves.

We also compute the error of each ensemble member with respect to reanalysis, taken to be an estimate of the true atmospheric state. We used the ERA5 reanalysis dataset \cite{hersbach_era5_2020}. 

We compute the error of the forecasts, both for the identical twins and with respect to reanalysis, every 12 hours for 10 days of lead time. Each error curve thus consists of 21 points.

\subsection{Fitting procedure}\label{ssec:fitting}

We use ensemble Kalman inversion (EKI)\cite{iglesias_ensemble_2013} to fit the parameters of the SDE~\eqref{eq:SDE_add} to the NWP model error growth curves. EKI is derivative-free, which makes it an especially attractive method for parameter estimation in SDEs. It has previously been used for fitting SDE parameters in Schneider et al. (2021)\cite{schneider_learning_2021}. We only briefly describe EKI here; see also Chada (2022)\cite{chada_review_2022}, Calvello et al. (2024)\cite{calvello_ensemble_2024}, and Vernon et al. (2025)\cite{vernon_nesterov_2025} for more detailed introductions.

EKI is an iterative ensemble method for approximately minimizing with respect to $\theta$ a cost function of the form
\begin{equation}\label{eq:cost}
	(y - \mathcal{G}(\theta))^\top\Gamma^{-1}(y - \mathcal{G}(\theta)).
\end{equation}
Here, $\theta$ are the parameters one wishes to estimate, $y$ are the observed data with noise covariance $\Gamma$, and $\mathcal{G}$ is the forward model that is assumed to have generated the data.

As the observations $y$, we take the logarithm of the mean and standard deviation of the error growth curves over time; since the error growth curve is 21 elements long (10 days sampled every 12 hours), we obtain a vector with $n = 42$ components for each curve.

The logarithm of the data is taken in order to fit well both the initial and later portions of the error growth curve. We use the EnsembleKalmanProcesses.jl library\cite{dunbar_ensemblekalmanprocessesjl_2022}, with 100 ensemble members and 30 iterations, to calibrate the $v_\infty$, $\alpha$, $s$, and $\sigma$ parameters.

For all the calibrations, except for the case of wavenumbers $k \geq 10$ described below, the initial ensembles were drawn from the distributions
\begin{enumerate}[(a), nosep]
    \item $v_\infty$ with mean $1.4\times10^4$ and standard deviation $5~000$, constrained to be positive;
    \item $\alpha$ with mean 0.6 and standard deviation 0.3, constrained to be positive;
    \item $s$ with mean 200 with standard deviation 100, constrained to be positive; and
    \item $\sigma$ with mean 0.2 and standard deviation 0.1, constrained to be between 0 and 1.
\end{enumerate}
The means were chosen by rough inspection of solutions, while the standard deviations were chosen to be wide enough to explore a range of solutions. The constraints are implemented by calibrating in a transformed space and applying the inverse transformation to obtain the parameter estimates\cite{dunbar_ensemblekalmanprocessesjl_2022}: for the parameters constrained to be positive, we take the logarithm, and for the parameters constrained to be between 0 and 1 we apply the transformation $f(x) = \log(x / (1 - x))$. $\Gamma$ is set to be $0.25I_{n}$. For the $k\geq 10$ case, due to the smaller magnitudes of the errors, we set $\Gamma$ to be 40 times smaller, and the initial ensembles for $v_\infty$ and $s$ to have means and standard deviations 40 times smaller. The rest of the initial ensembles are left the same.

EKI begins with an initial ensemble of parameter sets, and generates another ensemble of parameter sets at each iteration. For each parameter set, we generate 300 realizations of the SDE using the SRIW1 integrator in the DifferentialEquations.jl package \cite{rackauckas_differentialequationsjl_2017} and compute the aforementioned statistics.

In order to capture the initial spread of forecasts, each realization is initialized with a value of initial mean-square difference from a pair of forecasts in the identical twin case, and with a mean-square difference from the ensemble mean in the reanalysis case. EKI proceeds to update the ensemble of parameter sets using $y$ and obtain the ensemble for the next iteration, converging towards parameter values that decrease the cost function \eqref{eq:cost}.

\subsection{Identical twins}

We first look at the error growth as measured by the divergence of identical twins, i.e., of two ensemble members initialized at the same time. This approach is used to measure the error growth due to chaotic evolution of the system, and it is often used in the literature \cite{dalcher_error_1987}. Note, however, that the error of individual ensemble members will be significantly larger than that of the ensemble mean, which is often used as the forecast. The impact of model error is partially captured in these forecasts by the stochastic perturbations to the physics tendencies.

\begin{figure}
	\includegraphics[scale=0.4]{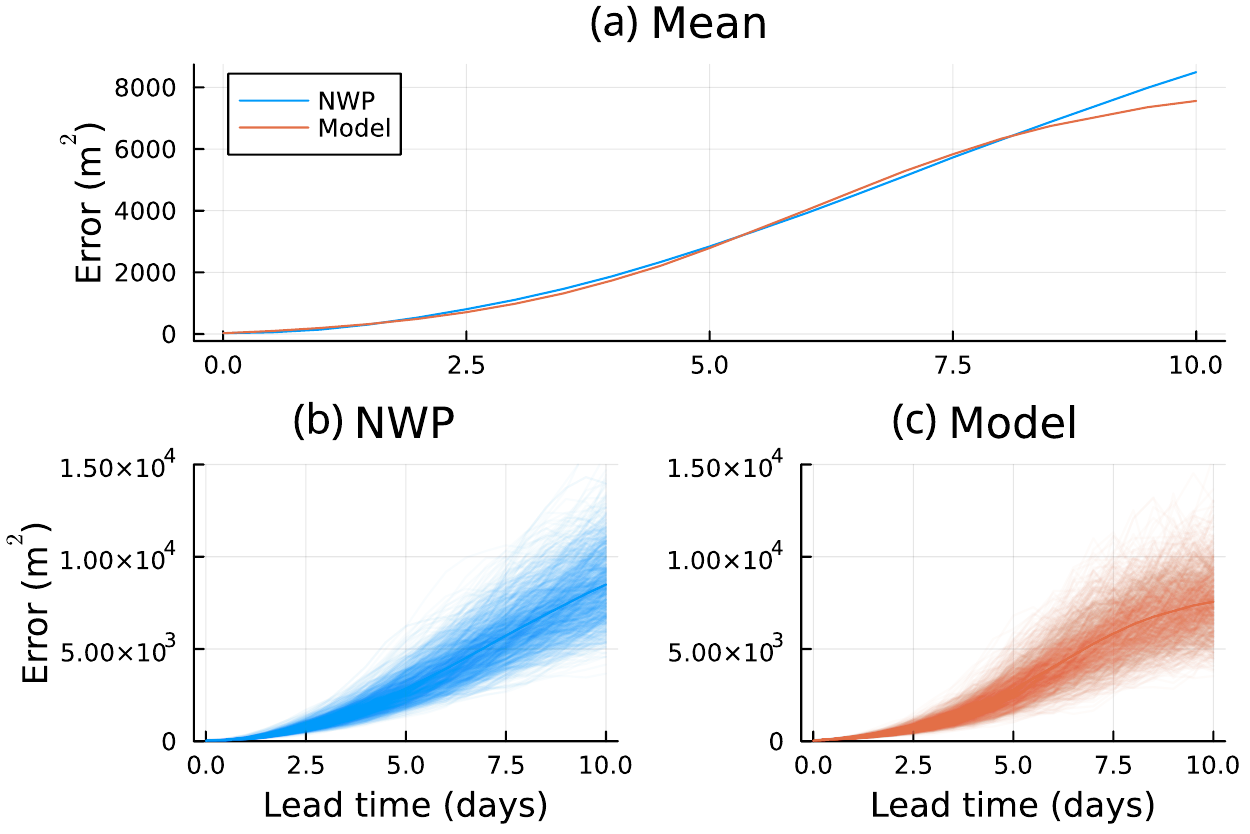}
	\caption{Mean error growth curves. (a) For identical twins in an operational NWP forecast (blue curve) and the SDE model (red curve); (b) error growth of the individual NWP forecast pairs; and (c) realizations of the fitted SDE model.}
	\label{fig:all_k}
\end{figure}

\begin{table}
\begin{tabular}[t]{m{9em}|llll}
	& $v_\infty$ (m$^2$) & $\alpha$ (d$^{-1}$) & $s$ (m$^2$ d$^{-1}$) & $\sigma$ (d$^{-1/2}$)\\
	\hline
	Identical twins & 8758 & 0.6062 & 109.7 & 0.2116 \\
	Identical twins $k \leq 9$ & 8305 & 0.6419 & 83.08 & 0.2236 \\
	Identical twins $k \geq 10$ & 197.6 & 0.7853 & 8.709 & 0.2050\\
    Identical twins, July & 8253 & 0.616 & 97.41 & 0.2120 \\
	Error with respect to reanalysis & 10530 & 0.4962 & 129.9 & 0.1859
\end{tabular}
\caption{Parameters of the SDE error growth model estimated using the EKI procedure described in Sec.~\ref{ssec:fitting}.}
\label{table:params}
\end{table}

Figure~\ref{fig:all_k} shows the fit of the error growth model to the curves obtained by NWP. The mean as well as the plume of individual trajectories exhibits a good fit, except for a slight underestimation at later times. The fitted SDE model appears to begin to saturate around day 9, while the NWP does not; this could be remedied by using lead times beyond 10 days in the fitting procedure. The parameters are given in the first row of Table~\ref{table:params}; we also give the parameters for the fit for forecasts from the same system but initialized from the 1st through the 31st of July 2007 in the fourth row.

In comparison to our results, Dalcher \& Kalnay (1987)\cite{dalcher_error_1987} found $v_\infty = 11076$ m$^2$, $\alpha = 0.43$ d$^{-1}$, and $s = 439$ m$^2$ d$^{-1}$ for boreal winter and $v_\infty = 9856$ m$^2$, $\alpha = 0.44$ d$^{-1}$, and $s = 597$ m$^2$ d$^{-1}$ for boreal summer. As expected, $v_\infty$ is smaller in our results, due to the improvements in forecasting. As in the Dalcher \& Kalnay (1987) results, $v_\infty$ was slightly smaller for summer than winter, and $\alpha$ was almost identical between the two seasons. However, we find a slightly smaller $s$ for summer than winter, while Dalcher \& Kalnay (1987) found the opposite.

For the identical twins, the best fits being provided by $s \neq 0$ can be explained by the stochastic physics, which are meant to capture the impact of systematic error on the growth in uncertainty. However, the fact that even for identical twins without stochastic physics the best fits were provided by $s\neq 0$ in Dalcher \& Kalnay (1987)\cite{dalcher_error_1987} and Zhang et al. (2019)\cite{zhang_what_2019} suggests that $s$ can lead to a more realistic shape of the error growth curve even in the absence of this error.

\begin{figure}[tbp]
	\includegraphics[scale=0.27]{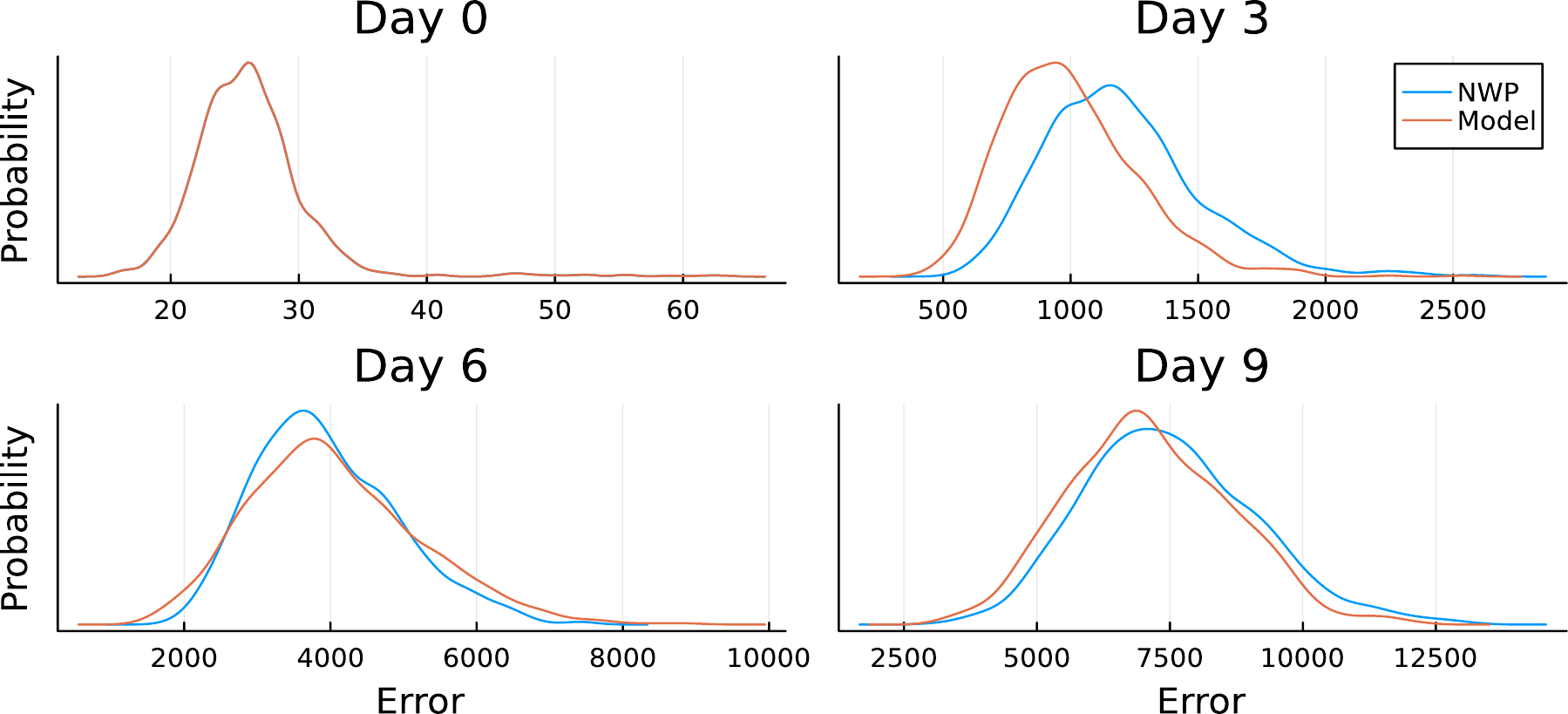}
	\caption{PDFs of the error at increasing lead times, for both the NWP data (blue curves) and the fitted SDE model (red curves). The PDFs are estimated from data using kernel density estimation.}
	\label{fig:pdfs}
\end{figure}

Figure~\ref{fig:pdfs} presents the time-evolving PDFs of trajectories, and the fitted model again matches closely the NWP error curves. At 3 days, however, the SDE model underestimates the error systematically; by 6 days this is no longer the case.

\begin{figure}
	\includegraphics[scale=0.4]{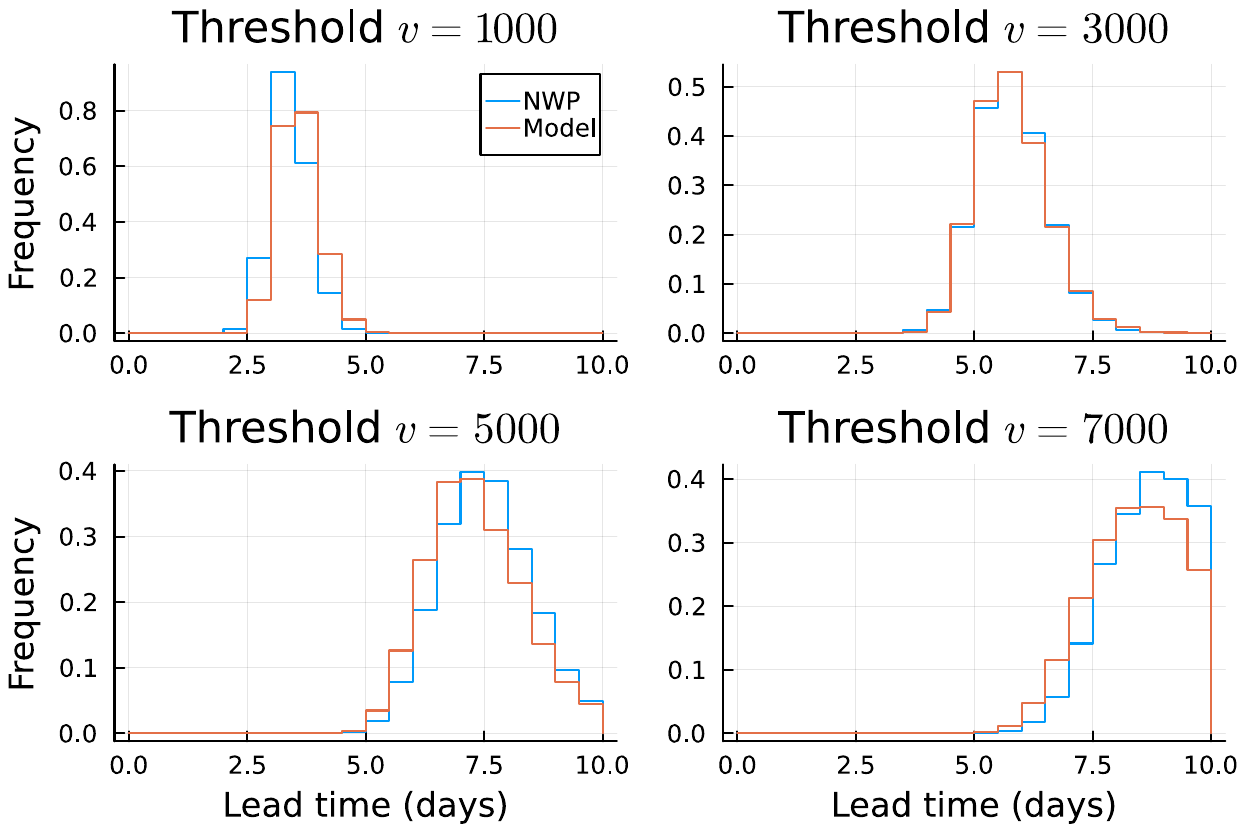}
	\caption{Histograms of first passage times of NWP error growth and of the fitted SDE model at different threshold values. Error growth curves with a first passage time greater than 10 days were excluded.}
 \label{fig:first_passage}
\end{figure}

Figure~\ref{fig:first_passage} displays histograms of first passage times computed for different thresholds for NWP as well as for the fitted SDE model. The histograms are closely matched. Both also flatten at higher thresholds and are positively skewed, as suggested in Sec.~\ref{sec:first_passage}.

\subsection{Scale analysis}

We repeated the identical twin experiments, but filtering the fields into low ($k\leq 9$) and high ($k\geq 10$) wavenumbers to measure the impact of spatial scale on error growth. For each of the wavenumber ranges, we filter the geopotential height fields and compute the mean-square error between the filtered fields.

The parameter values of the fits are summarized in Table~\ref{table:params}. The small scales exhibit a larger rate of error growth $\alpha$, saturate more quickly, and have a similar degree of stochasticity in the error growth as the large scales. The model generally appears to fit the large-scale error growth better than the small scales (not shown).

\subsection{Error with respect to reanalysis}

\begin{figure}
	\centering
	\includegraphics[scale=0.4]{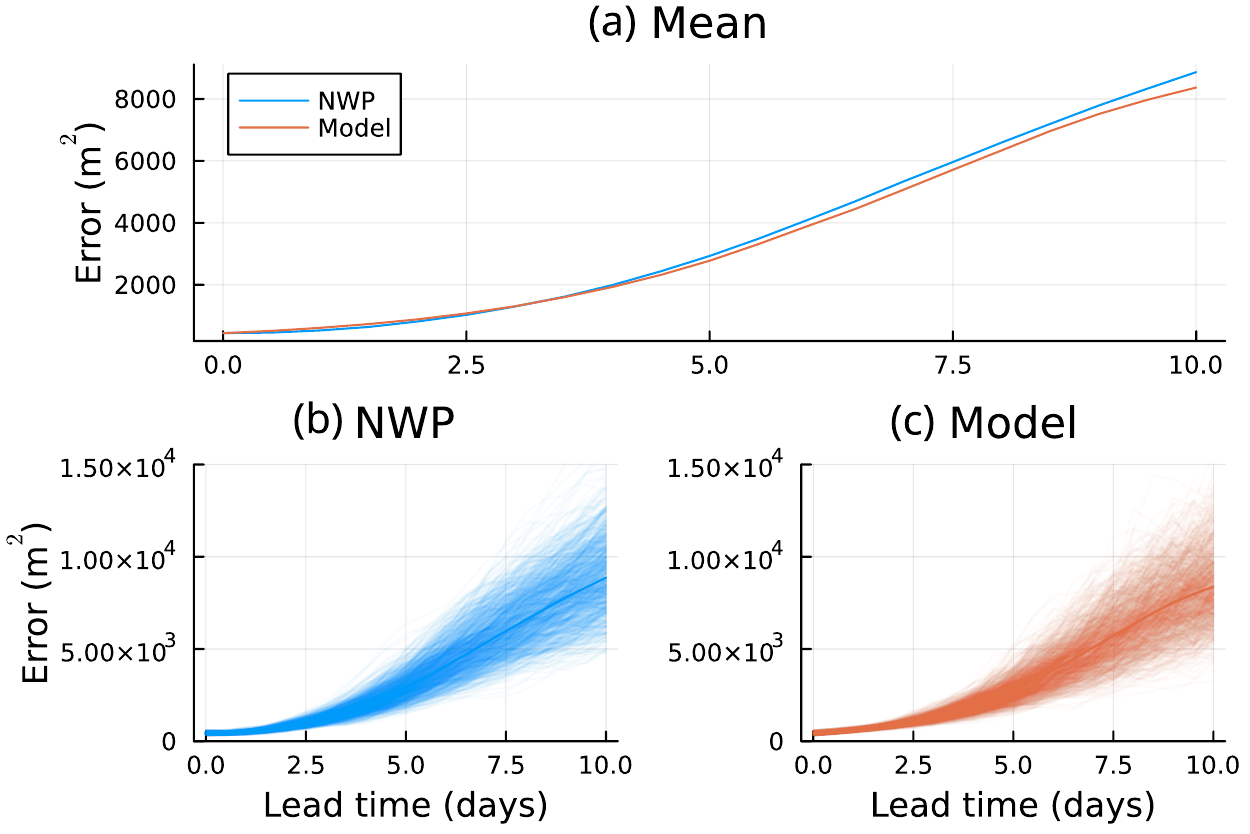}
	\caption{Same as Fig.~\ref{fig:all_k} but vs. reanalysis. (a) For NWP forecasts with respect to reanalysis (blue curve) and the SDE model (red curve); (b) error growth of the individual NWP forecasts; and (c) realizations of the fitted SDE.}
	\label{fig:systematic}
\end{figure}

We also compute the error curves of the NWP forecasts with respect to reanalysis, which includes the impact of systematic error. This is shown in Fig.~\ref{fig:systematic}, and the parameters given in Table~\ref{table:params}. Again, the SDE model exhibits a good fit to NWP error growth. Because of the impact of systematic error, $v_\infty$ is considerably larger than for the identical twin experiments. The best fit is provided by a smaller $\alpha$ but larger $s$ than for identical twins. The larger $s$ is expected since the systematic error is now directly contributing to the error growth.

\section{Concluding remarks}\label{sec:conclude}

We introduced a novel stochastic model of forecast error growth \eqref{eq:SDE_add}, building upon previous deterministic and stochastic models. Our nonlinear SDE model fits well the forecast error growth of the ECMWF's NWP model, accurately capturing both its mean and probabilistic aspects. Hence, the SDE model can be applied to quantify predictability in a probabilistic way.

Left for future work is the detailed analysis of the physical mechanisms leading to stochastic error growth, the application of this model to the forecast error growth of other systems, and its application to the analysis of predictability in the tropics compared to the extratropics\cite{bach_local_2019}. While here we looked at the error growth of individual ensemble members in NWP, it may also be interesting to model the growth of probabilistic forecast skill scores, such as the continuous ranked probability score (CRPS)\cite{hersbach_decomposition_2000}.

\section*{Data availability statement}

Data sharing is not applicable to this article as no new data were created or analyzed in this study.

\section*{Acknowledgments}

E. Bach was supported by the Foster and Coco Stanback Postdoctoral Fellowship at the California Institute of Technology. The work of D. Crisan has been partially supported by European Research Council (ERC) Synergy grant STUOD-DLV-856408. This paper is a ClimTip contribution; the Quantifying climate tipping points and their impacts (ClimTip) project has received funding from the European Union’s Horizon research and innovation programme under grant agreement No. 101137601. M. Ghil also received support from the French Agence Nationale de la Recherche (ANR) project TeMPlex under grant award ANR-23-CE56-1214 0002.

\bibliography{error_growth_sde}

\appendix

\section{The stochastic model: positivity and well posedness}
\label{sec:proof}

\begin{theorem}
	Equation~\eqref{eq:nondim} has a unique global solution $\{v^{\ast }(t^*): t^*\ge 0\}$, $P$-almost surely for any initial value $v^{\ast } (0) >0$. Moreover the solution remains positive  $P$-almost surely.  
\end{theorem}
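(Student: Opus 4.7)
The plan is to follow the standard Khasminskii-type argument that simultaneously yields local existence, non-explosion, and positivity. Because the drift $b(x)=(x+1)(1-x/v_{\infty}^{\ast})$ and the diffusion $\sigma(x)=\beta x$ are locally Lipschitz and smooth on $(0,\infty)$, classical SDE theory (see, e.g., Theorem~3.5 in Mao, \emph{Stochastic Differential Equations and Applications}) yields a unique strong solution up to an explosion time $\tau_\infty$ defined as the limit of the stopping times
\begin{equation}
\tau_n := \inf\{t^\ast \ge 0 : v^\ast(t^\ast) \notin (1/n, n)\}.
\end{equation}
On $[0,\tau_\infty)$ the solution is uniquely defined and, by construction, strictly positive. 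Hence the entire content of the theorem reduces to showing $\tau_\infty = \infty$ almost surely.

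The key step is to build a Lyapunov function that blows up both at $0$ and at $\infty$, thereby controlling excursions in either direction. The natural choice is
\begin{equation}
V(x) = x - 1 - \log x, \qquad x > 0,
\end{equation}
which is nonnegative, $C^2$, and satisfies $V(x)\to\infty$ as $x\to 0^+$ and as $x\to\infty$. I would then apply It\^o's formula to $V(v^\ast)$ and compute the action of the infinitesimal generator $\mathcal{L}$ of Eq.~\eqref{eq:nondim} on $V$, namely
\begin{equation}
\mathcal{L}V(x) = (x+1)(1-x/v_{\infty}^{\ast})\Bigl(1-\tfrac{1}{x}\Bigr) + \tfrac{1}{2}\beta^{2}x^{2}\cdot\tfrac{1}{x^{2}}.
\end{equation}
Expanding gives $\mathcal{L}V(x) = x - x^{2}/v_{\infty}^{\ast} - 1/x + 1/v_{\infty}^{\ast} + \beta^{2}/2$, which is bounded above on $(0,\infty)$ because the $-x^{2}/v_\infty^\ast$ term dominates at infinity and the $-1/x$ term dominates near zero. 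Let $K$ denote such an upper bound.

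Applying It\^o to $V$, stopping at $t^\ast\wedge\tau_n$ and taking expectations removes the martingale part, yielding
\begin{equation}
\mathbb{E}[V(v^\ast(t^\ast\wedge\tau_n))] \le V(v^\ast(0)) + K\,t^\ast.
\end{equation}
On the event $\{\tau_n \le t^\ast\}$, the solution hits either $1/n$ or $n$ at time $\tau_n$, so $V(v^\ast(\tau_n)) \ge \mu_n := \min\{V(1/n), V(n)\}$. Markov's inequality then gives
\begin{equation}
P(\tau_n \le t^\ast) \le \frac{V(v^\ast(0)) + K t^\ast}{\mu_n}.
\end{equation}
Since $\mu_n\to\infty$, letting $n\to\infty$ shows $P(\tau_\infty \le t^\ast) = 0$ for every $t^\ast$, so $\tau_\infty = \infty$ almost surely. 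Because $v^\ast$ never leaves $(0,\infty)$ before $\tau_\infty$, positivity for all $t^\ast\ge 0$ follows simultaneously.

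The only non-routine ingredient is the selection of the Lyapunov function: one needs a $V$ that penalizes both the $0$ and $\infty$ boundaries and is compatible with the multiplicative noise, so that the It\^o correction $\tfrac{1}{2}\beta^{2}x^{2}V''(x)$ does not destroy the Khasminskii bound. The choice $V(x)=x-1-\log x$ is well-suited because $V''(x)=1/x^{2}$ is exactly cancelled by the $x^{2}$ in the diffusion, leaving the constant $\beta^{2}/2$; everything else in $\mathcal{L}V$ is then controlled by the deterministic drift. Once this choice is made, the remaining estimates are standard, so I do not anticipate any other real obstacle.
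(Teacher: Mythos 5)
Your proof is correct, and it takes a genuinely different route from the one in the paper. The paper's argument is a two-sided comparison: it bounds the drift below by $-\chi (v^\ast)^2$ and above by $-\gamma v^\ast + 1$, and then invokes the comparison theorem for one-dimensional SDEs (Karatzas--Shreve, Prop.~2.18) to squeeze $v^\ast$ pathwise between an explicitly positive lower process $v^\ast_{\mathrm{low}}$ and a non-exploding affine upper process $v^\ast_{\mathrm{high}}$; positivity and non-explosion are read off separately from the two bounding processes. You instead run the standard Khasminskii scheme with the single Lyapunov function $V(x)=x-1-\log x$, whose blow-up at both boundaries of $(0,\infty)$ handles hitting zero and escaping to infinity in one stroke. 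Your computation of $\mathcal{L}V(x) = x - x^{2}/v_{\infty}^{\ast} - 1/x + 1/v_{\infty}^{\ast} + \beta^{2}/2$ is right, and it is indeed bounded above (by $v_{\infty}^{\ast}/4 + 1/v_{\infty}^{\ast} + \beta^{2}/2$), so the Markov-inequality step closes the argument. The trade-off: the paper's comparison approach yields explicit pathwise envelopes for the solution (potentially reusable for further estimates, e.g.\ moment or first-passage bounds), whereas your Lyapunov argument is shorter, self-contained, avoids reliance on the one-dimensional comparison theorem (and thus generalizes more readily beyond scalar equations), and sidesteps the need to solve the bounding SDEs explicitly --- a point where the paper's upper-bound formula is in fact slightly off, since the affine equation's solution carries an extra integral term from the inhomogeneity. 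One minor presentational note: you should state explicitly that the stopping and expectation step uses that the stopped stochastic integral $\int_0^{t^\ast\wedge\tau_n}\beta\bigl(v^\ast(s)-1\bigr)\,\mathrm{d}W^\ast(s)$ is a true martingale because $v^\ast$ is confined to $[1/n,n]$ up to $\tau_n$; this is routine but worth a sentence.
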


\begin{proof}
First note that the coefficients of Eq.~\eqref{eq:nondim} are locally Lipschitz. It follows that, for any initial value $v^{\ast } (0) \in \mathbb{R}$, Eq.~\eqref{eq:nondim} has a unique maximal local strong	solution on $\left[0, \tau\right)$, where $\tau$ is the explosion time; see, for example Theorem~3.2.2, p. 95 in Mao (2007)\cite{mao_stochastic_2007}. 

We show next that $v^{\ast }$ cannot explode in finite time. More precisely, we show that $P$-almost surely $\tau =\infty $; in other words, Eq.~\eqref{eq:nondim} has a global solution, $P$-almost surely. To do this, we prove that $v^{\ast }$ is bounded from above and from below by stochastic processes that are finite on the entire half line. Moreover, the lower bound process has sample paths of any solution that will never reach the origin, $P$-almost surely.    

\textbf{Lower bound}. Observe that 
	\begin{equation*}
		(v^{\ast }+1)\left( 1-\frac{v^{\ast }}{v_{\infty }^{\ast }}\right) +\chi (v^{\ast })^{2}=\left(
		v^{\ast }\left( \frac{\frac{\displaystyle 1}{\displaystyle v_{\infty }^{\ast }}-1}{2}\right) -1\right) ^{2},
	\end{equation*}
where 
	\begin{equation*}
		\chi =\frac{\left( \frac{\displaystyle 1}{\displaystyle v_{\infty }^{\ast }}-1\right) ^{2}}{4}+\frac{1%
		}{v_{\infty }^{\ast }},
	\end{equation*}
hence
	\begin{equation*}
		(v^{\ast }+1)\left( 1-\frac{v^{\ast }}{v_{\infty }^{\ast }}\right) \geq -\chi (v^{\ast })^{2}.
	\end{equation*}
By applying  Proposition 2.18 in Karatzas and Shreve (1998)\cite{karatzas_brownian_1998} we get that, $P$-almost surely, $v^{\ast }\geq v^*_\text{low}$, where $v^*_\text{low}$ is the solution of the equation
\begin{equation}
		\text{d}v^*_\text{low}=-\chi \left( v^*_\text{low}\right) ^{2}\text{d}t^*+\beta v^*_\text{low}\,
		\text{d}W^*.  \label{multlower}
\end{equation}
We emphasize that the inequality $v^*\geq v^*_\text{low}$ holds pathwise and not just in probability.

Equation~\eqref{multlower} has a unique maximal solution (again, say by Theorem 3.2.2, p. 95 in Mao (2007)\cite{mao_stochastic_2007}), that satisfies the identity 
\begin{equation*}
		v^*_\text{low}(t^*)=v^\ast(0)\exp \left( \beta W_{t^*}-\frac{1}{2}\beta
		^{2}t^*-\int_{0}^{t^*}\chi v^*_\text{low}(s)\text{d}s\right) > 0.
	\end{equation*}
In particular, $v^*_\text{low}$ remains positive at all times and therefore
	\begin{align*}
		v^*_\text{low}(t^*)&=v^\ast(0)\exp \left( \beta W_{t^*}-\frac{1}{2}\beta
		^{2}t^*-\int_{0}^{t^*}\chi v^*_\text{low}(s)\text{d}s\right)\\ &<v^\ast(0)\exp \left( \beta W_{t^*}\right)\\
		&<\infty.
							\end{align*}
It follows that Eq.~\eqref{multlower} has a unique \emph{global} solution. Hence $v^{\ast}$ likewise stays positive on $[0,\tau)$ and, therefore, the explosion (if any) can only happen at $+\infty $. We show next that this is not possible either.

\textbf{Upper bound}. Observe that
\begin{equation*}
		(v^{\ast}+1)\left( 1-\frac{v^{\ast}}{v_{\infty }^{\ast }}\right) +\gamma v^{\ast}-1=-\frac{1}{%
			v_{\infty }^{\ast }}(v^{\ast})^{2},
	\end{equation*}
where
	\begin{equation*}
		\gamma =\frac{1}{v_{\infty }^{\ast }}-1.
	\end{equation*}
Hence
	\begin{equation*}
		(v^{\ast}+1)\left( 1-\frac{v^{\ast}}{v_{\infty }^{\ast }}\right) \leq -\gamma v^{\ast}+1.
	\end{equation*}
Again, by applying Proposition 2.18 in Karatzas and Shreve (1998)\cite{karatzas_brownian_1998} we get that, $P$-almost surely, $v^{\ast}\leq v^*_\text{high}$, where $v^*_\text{high}$ is the solution of the equation
\begin{equation}
		\text{d}v^*_\text{high}=\left( -\gamma v^*_\text{high}+1\right) \text{d}t^*+\beta v^*_\text{high}%
		\text{d}W^*. \label{multhigh}
	\end{equation}
For linear or affine coefficients, Equation~\eqref{multhigh} has the unique global solution
\begin{equation*}
		v^*_\text{high}(t^*)=v^\ast(0)\exp \left( \beta W_{t^*}-\frac{1}{2}\beta
		^{2}t^*-\gamma t^*+t^*\right)  <\infty.
	\end{equation*}
In particular it does not explode in finite time with probability 1, and therefore $v^\ast$, the solution of Eq.~\eqref{eq:nondim}, will not explode at $+\infty $. The global existence of $v^\ast$, as well as its positivity, are now ensured. \end{proof}

\section{The stochastic model's stationary distribution}\label{sec:stationary}

The Fokker--Planck equation that corresponds to Eq.~\eqref{eq:nondim} is 
\begin{equation*}
	\frac{\partial\rho}{\partial t^*} = -\frac{\partial}{\partial v^*}\left[(v^{\ast }+1)\left( 1-\frac{v^{\ast }}{v_{\infty }^{\ast }}%
	\right)\rho\right] + \frac{\partial^2}{\partial {v^*}^2}\left[\frac{\beta^2 {v^*}^2}{2}\rho\right].
\end{equation*}
We would like to find a stationary distribution $\rho_\infty$, namely solve $\partial \rho_\infty/\partial t^* = 0$.

We first take the integral with respect to $v^*$:
\begin{equation}
	-(v^{\ast }+1)\left( 1-\frac{v^{\ast }}{v_{\infty }^{\ast }}\right)\rho + \frac{\partial}{\partial {v^*}}\left[\frac{\beta^2 {v^*}^2}{2}\rho\right] = C,\label{eq:fp_integrated}
\end{equation}
where $C$ is an integration constant.
The solution to Eq.~\eqref{eq:fp_integrated} is then
\begin{equation*}
	\rho_\infty(v^*) = A(v^*) \exp\!\left(-\frac{2 \left(\left(\beta ^2 v_{\infty}^*-v_{\infty}^*+1\right) \log (v^*)+\frac{\displaystyle v_{\infty}^*}{\displaystyle v^*}+v^*\right)}{\beta ^2 v_{\infty}^*}\right),
\end{equation*}
where
\begin{align}\label{eq:A}
	A(v^*) =& \int _1^{v^*}\frac{2 C \exp \left(-\frac{\displaystyle 2\left((v_{\infty}^*-1) \log (V)-V-\frac{v_{\infty}^*}{V}\right)}{\displaystyle\beta ^2 v_{\infty}^*}\right)}{\beta ^2}\text{d}V\\
	&\qquad+D,\nonumber
\end{align}
and $D$ is another integration constant.

Since the limit of the integrand in Eq.~\eqref{eq:A} is infinite as $v^*\to\infty$ unless $C=0$, a necessary condition for $\rho_\infty(v^*)\to 0$ as $v^*\to\infty$ is that $C = 0$. Then,
\begin{equation*}
	\rho_\infty(v^*) = D\,\exp\!\left(-\frac{2 \left(\left(\beta ^2 v_{\infty}^*-v_{\infty}^*+1\right) \log (v^*)+\frac{\displaystyle v_{\infty}^*}{\displaystyle v^*}+v^*\right)}{\beta ^2 v_{\infty}^*}\right).
\end{equation*}
The constant $D$ can be eliminated using the normalization condition $\int_0^\infty \rho_\infty(v^*)\,dv^* = 1$:
\begin{equation*}
	D = \frac{{v_{\infty}^*}^{\displaystyle\frac{{v_{\infty}^*}^{-1}-1}{\beta ^2}+\frac{1}{2}}}{2 K_{\displaystyle\frac{2{v_{\infty}^*}^{-1}-2}{\beta ^2}+1}\left(\frac{\displaystyle 4}{\displaystyle{v_{\infty}^*}^{1/2} \beta ^2}\right)},
\end{equation*}
where $K_n(z)$ is the modified Bessel function of the second kind, satisfying $-y \left(n^2+z^2\right)+z^2 y''+z y'=0$.

\section{Initial curvature}\label{sec:curvature}

\begin{figure}
	\includegraphics[scale=0.35]{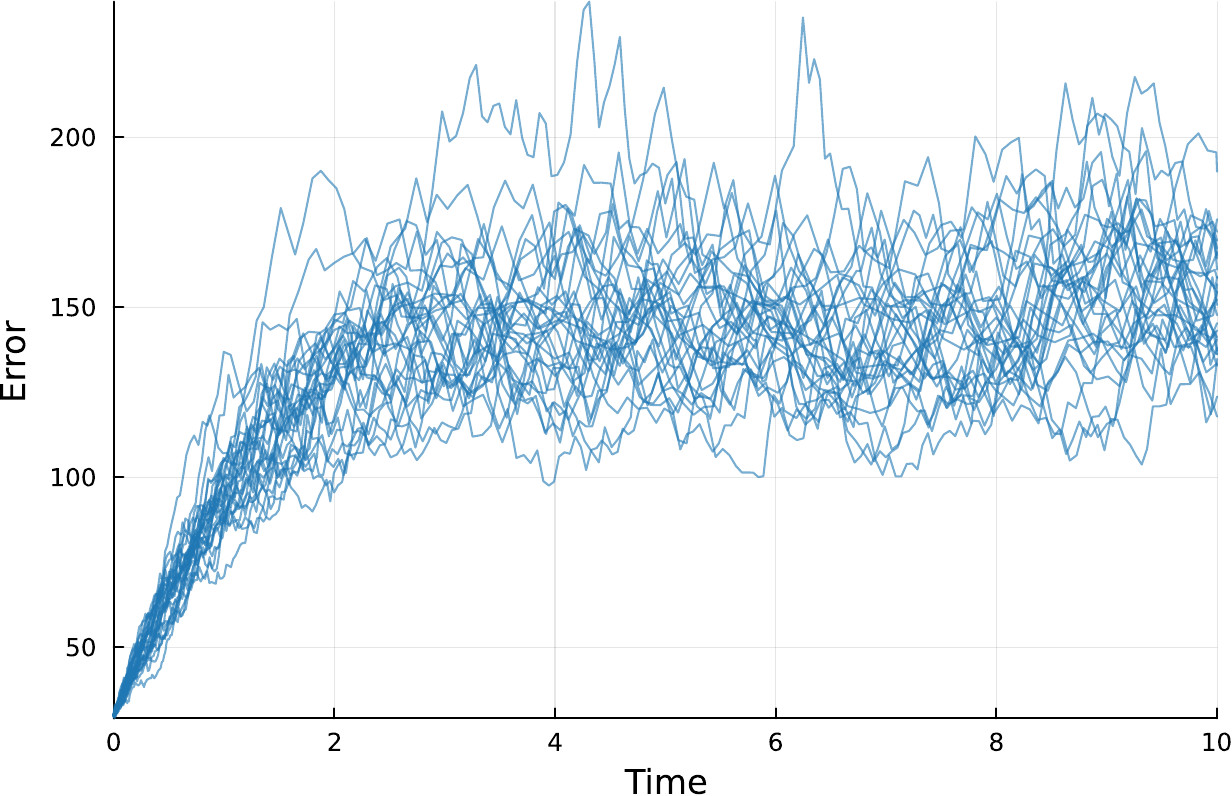}
	\caption{Sample realizations of the SDE~\eqref{eq:SDE_add} with concave expectation at time 0. Here $v(0) = 30$ and the parameters are $v_\infty = 150$, $\alpha = 0.6$, $s = 80$, and $\sigma = 0.2$.
	}
    \label{fig:sample_paths_concave}
\end{figure}
Figure~\ref{fig:sample_paths_concave} shows sample realizations of the SDE where the expectation is concave at time 0; compare to Fig.~\ref{fig:sample_paths} where the expectation is convex at time 0. We now determine the parameter values that lead the initial curvature to be either positive or negative.

Let us look at the deterministic DK model of Eq.~\eqref{eq:DK}
\begin{equation*}
\frac{\mathrm{d}v^{d}}{\mathrm{d}t}=(\alpha v^{d}+s)(1-v^{d}/v_{\infty }),
\end{equation*}
where we assume that the initial condition $v_{0}\leq v_{\infty}.$ Then
\begin{equation*}
\frac{\mathrm{d}^{2}v^{d}}{\mathrm{d}t^{2}}=\left( \alpha -\frac{s}{%
v_{\infty }}-2\frac{\alpha }{v_{\infty }}v^{d}\right) \frac{\mathrm{d}v^{d}}{%
\mathrm{d}t}.
\end{equation*}
Note that asymptotically 
\begin{equation*}
\lim_{v^{d}\rightarrow v_{\infty }}\left( \alpha -\frac{s}{v_{\infty }}-2%
\frac{\alpha }{v_{\infty }}v^{d}\right) =-\alpha-\frac{s}{v_{\infty }}<0.
\end{equation*}
In other words, the function is concave in the asymptotic regime. We have two
cases,
\begin{itemize}
\item If $\alpha -\frac{s}{v_{\infty }}-2\frac{\alpha }{v_{\infty }}%
v_{0}\leq 0$, then $\frac{\mathrm{d}^{2}v^{d}}{\mathrm{d}t^{2}}<0$ on the interval $(0,\infty )$ and therefore the function is strictly concave on the interval $(0,\infty )$.
\item If $\alpha -\frac{s}{v_{\infty }}-2\frac{\alpha }{v_{\infty }}v_{0}>0$, then there exists $t_{c}$ such that the function is convex on $\left(
0,t_{c}\right) $ and concave on $\left( t_{c},\infty \right)$; in other words, $t_{c}$ is an inflection point.
\end{itemize}

Returning to the stochastic model of Eq.~\eqref{eq:SDE_add}, we find that
\begin{equation*}
\frac{\mathrm{d}\mathbb{E}\left[ v\right] }{\mathrm{d}t}=(\alpha \mathbb{E}%
\left[ v\right] +s)(1-\mathbb{E}\left[ v\right] /v_{\infty })-\frac{\alpha }{%
v_{\infty }}\mathrm{Var}[v],
\end{equation*}
and 
\begin{equation*}
\frac{\mathrm{d}^{2}\mathbb{E}\left[ v\right] }{\mathrm{d}t^{2}}=\left(
\alpha -\frac{s}{v_{\infty }}-2\frac{\alpha }{v_{\infty }}\mathbb{E}\left[ v%
\right] \right) \frac{\mathrm{d}\mathbb{E}\left[ v\right] }{\mathrm{d}t}-%
\frac{\alpha }{v_{\infty }}\frac{\mathrm{dVar}[ v] }{\mathrm{d}t}.
\end{equation*}
We would like to get the limit at 0 of the derivative of the variance. Using It\^o's lemma,
\begin{eqnarray*}
\mathrm{d}v^{2} &=&2v(\alpha v+s)(1-v/v_{\infty })\mathrm{d}t+2\sigma 
v^2\mathrm{d}W+\sigma^{2}v^2\mathrm{d}t, \\
\frac{\mathrm{d}\mathbb{E}\left[ v^{2}\right] }{\mathrm{d}t} &=&2\mathbb{E}%
\left[ v(\alpha v+s)(1-v/v_{\infty })\right] +\sigma ^{2} \mathbb{E}[v^2],\\
\frac{\mathrm{d}\mathbb{E}\left[ v\right] ^{2}}{\mathrm{d}t} &=&2\mathbb{E}%
\left[ v\right] (\alpha \mathbb{E}\left[ v\right] +s)(1-\mathbb{E}\left[ v%
\right] /v_{\infty })\\
&&\qquad\qquad\qquad\qquad-2\mathbb{E}\left[ v\right] \frac{\alpha }{v_{\infty }}%
\mathrm{Var}[v] .
\end{eqnarray*}
If we assume a deterministic initial condition $v_0$ then 
\begin{eqnarray*}
\lim_{t\rightarrow 0}\frac{\mathrm{d}\mathbb{E}\left[ v^{2}\right] }{\mathrm{%
d}t} &=&2v_{0}(\alpha v_{0}+s)(1-v_{0}/v_{\infty }) + \sigma^2 v_0^2, \\
\lim_{t\rightarrow 0}\frac{\mathrm{d}\mathbb{E}\left[ v\right] ^{2}}{\mathrm{%
d}t} &=&2v_{0}(\alpha v_{0}+s)(1-v_{0}/v_{\infty }), \\
\lim_{t\rightarrow 0}\frac{\mathrm{dVar}\left[ v\right] }{\mathrm{d}t}
&=& \sigma^2 v_0^2, \\
\lim_{t\rightarrow 0}\frac{\mathrm{d}^{2}\mathbb{E}\left[ v\right] }{\mathrm{%
d}t^{2}} &=&\left( \alpha -\frac{s}{v_{\infty }}-2\frac{\alpha }{v_{\infty }}%
v_{0}\right) (\alpha v_{0}+s)(1-v_{0}/v_{\infty })\\
&& - \frac{\alpha}{v_\infty}\sigma^2 v_0^2 \quad
< \quad \lim_{t\rightarrow 0}\frac{\mathrm{d}^{2}v^{d}}{\mathrm{d}t^{2}}.
\end{eqnarray*}
Hence, $\lim_{t\rightarrow 0} {\mathrm{d}^{2}\mathbb{E}\left[ v\right] }/{%
\mathrm{d}t^{2}}$ will be negative for sufficiently high $\sigma ^{2}$; in other words, the expectation is a concave function in a neighborhood of 0.

\end{document}